\documentclass[12pt]{article}

\newcommand{\commentout}[1]{}

\usepackage{amsmath, amsthm, amssymb,amscd}
\usepackage{fullpage, paralist,enumerate}
\usepackage{verbatim}
\usepackage[all]{xy}
\usepackage[parfill]{parskip}
\usepackage{graphicx}

\usepackage{afterpage}

\usepackage{amsmath}
\usepackage{amsfonts}
\usepackage{mathcomp}
\usepackage{textcomp}
\usepackage{amssymb}
\usepackage{latexsym}
\usepackage{graphicx}
\usepackage{courier}
\usepackage[english]{babel}
\usepackage{mathbbol}
\usepackage{multirow}
\usepackage{latexsym}
\usepackage{epsfig}
\usepackage{subfigure}
\usepackage{dcolumn}
\usepackage{bm}
\usepackage{colordvi}
\usepackage{color}
\usepackage{booktabs}
\usepackage{stmaryrd}
\usepackage{cite}
\usepackage[linesnumbered,commentsnumbered,ruled]{algorithm2e}
\usepackage{epstopdf}
\input xy
\xyoption{all}

\newtheorem{thm}{Theorem}[section]
\newtheorem{prop}[thm]{Proposition}

\newtheorem{rmk}[thm]{Remark}

\newcommand{\nwc}{\newcommand*}

\nwc{\ben}{\begin{equation*}}
\nwc{\bea}{\begin{eqnarray}}
\nwc{\beq}{\begin{eqnarray}}
\nwc{\bean}{\begin{eqnarray*}}
\nwc{\beqn}{\begin{eqnarray*}}
\nwc{\beqast}{\begin{eqnarray*}}

\nwc{\eal}{\end{align}}
\nwc{\een}{\end{equation*}}
\nwc{\eea}{\end{eqnarray}}
\nwc{\eeq}{\end{eqnarray}}
\nwc{\eean}{\end{eqnarray*}}
\nwc{\eeqn}{\end{eqnarray*}}

\CompileMatrices

\theoremstyle{remark}

\nwc{\nn}{\nonumber}
\nwc{\mb}{\mathbf}
\nwc{\ml}{\mathcal}

\newcommand{\lt}{\left}
\newcommand{\rt}{\right}

\nwc{\vep}{\varepsilon}
\nwc{\ep}{\epsilon}
\nwc{\vrho}{\varrho}
\nwc{\orho}{\bar\varrho}
\nwc{\vpsi}{\varpsi}
\nwc{\lamb}{\lambda}
\nwc{\om}{\omega}
\nwc{\Om}{\Omega}
\nwc{\al}{\alpha}

\nwc{\IA}{\mathbb{A}} 
\nwc{\bi}{\mathbf i}
\nwc{\bo}{\mathbf o}
\nwc{\IB}{\mathbb{B}}
\nwc{\IC}{\mathbb{C}} 
\nwc{\ID}{\mathbb{D}} 
\nwc{\IM}{\mathbb{M}} 
\nwc{\IP}{\mathbb{P}} 
\nwc{\II}{\mathbb{I}} 
\nwc{\IE}{\mathbb{E}} 
\nwc{\IF}{\mathbb{F}} 
\nwc{\IG}{\mathbb{G}} 
\nwc{\IN}{\mathbb{N}} 
\nwc{\IQ}{\mathbb{Q}} 
\nwc{\IR}{\mathbb{R}} 
\nwc{\IT}{\mathbb{T}} 
\nwc{\IZ}{\mathbb{Z}} 

\nwc{\cE}{{\ml E}}
\nwc{\cP}{{\ml P}}
\nwc{\cQ}{{\ml Q}}
\nwc{\cL}{{\ml L}}
\nwc{\cX}{{\ml X}}
\nwc{\cW}{{\ml W}}
\nwc{\cZ}{{\ml Z}}
\nwc{\cR}{{\ml R}}
\nwc{\cV}{{\ml V}}
\nwc{\cT}{{\ml T}}
\nwc{\crV}{{\ml L}_{(\delta,\rho)}}
\nwc{\cC}{{\ml C}}
\nwc{\cO}{{\ml O}}
\nwc{\cA}{{\ml A}}
\nwc{\cK}{{\ml K}}
\nwc{\cB}{{\ml B}}
\nwc{\cD}{{\ml D}}
\nwc{\cF}{{\ml F}}
\nwc{\cS}{{\ml S}}
\nwc{\cM}{{\ml M}}
\nwc{\cG}{{\ml G}}
\nwc{\cH}{{\ml H}}
\nwc{\bk}{{\mb k}}
\nwc{\bn}{{\mb n}}
\nwc{\bz}{\mb z}
\nwc{\by}{\mathbf{h}}
\nwc{\bZ}{\mathbf{Z}}
\nwc{\bF}{\mathbf{F}}
\nwc{\bE}{\mathbf{E}}
\nwc{\bV}{\mathbf{V}}
\nwc{\bY}{\mathbf Y}
\nwc{\br}{\mb r}
\nwc{\pft}{\cF^{-1}_2}
\nwc{\bU}{{\mb U}}
\nwc{\bG}{{\mb G}}
\nwc{\bg}{\mathbf{g}}
\nwc{\mbf}{\mathbf{f}}
\nwc{\mbe}{\mathbf{e}}
\nwc{\be}{\mathbf{e}}
\nwc{\ind}{\operatorname{I}}
\nwc{\mbx}{\mathbf{f}}
\nwc{\bb}{\mathbf{g}}
\nwc{\xmax}{f_{\rm max}}
\nwc{\xmin}{f_{\rm min}}
\nwc{\suppx}{\hbox{\rm supp} (\mbf)}
\nwc{\cI}{\IZ^2_N}
\nwc{\chis}{{\chi^{\rm s}}}
\nwc{\chii}{{\chi^{\rm i}}}
\nwc{\pdfi}{{f^{\rm i}}}
\nwc{\pdfs}{{f^{\rm s}}}
\nwc{\pdfii}{{f_1^{\rm i}}}
\nwc{\pdfsi}{{f_1^{\rm s}}}
\nwc{\thetatil}{{\tilde\theta}}
\nwc{\red}{\color{red}}
\nwc{\blue}{\color{blue}}

\nwc{\prox}{\hbox{prox}}
\nwc{\diag}{\hbox{\rm diag}}
\nwc{\supp}{{\hbox{\rm supp}}}

\nwc{\sloc}{J_{\rm f}}
\nwc{\bu}{\xi}
\nwc{\bv}{\eta}
\nwc{\cU}{\mathcal{U}}
\nwc{\cN}{\mathcal{N}}
\nwc{\bN}{\mathbf{N}}
\nwc{\mbm}{\mathbf{m}}
\nwc{\bw}{\mathbf{w}}
\nwc{\im}{i}
\nwc{\bom}{\mathbf{w}}
\nwc{\bt}{\mathbf{t}}
\nwc{\z}{y}
\nwc{\cY}{\mathcal{Y}}
\nwc{\bM}{\mathbf{M}}
\nwc{\half}{{1\over 2}}
\nwc{\Sf}{S_{\rm f}}
\nwc{\Jf}{J_{\rm f}}
\nwc{\nul}{\hbox{\rm null}_\IR}
\nwc{\spanR}{\hbox{\rm span}_\IR}
\nwc{\Arg}{\hbox{\rm Arg~}}
\nwc{\fdr}{S_{\rm f}}
\nwc{\phase}[1]{\exp\lt[i\measured #1\rt]}
\nwc{\xnul}{x_{\rm null}}

\begin{document}

\title{
Phase Retrieval by Linear Algebra}

\author{Pengwen Chen
\thanks{
Department of Applied Mathematics, National Chung Hsing University, Taichung  402, Taiwan. 
}
\and Albert Fannjiang
\thanks{Department of Mathematics, University of California, Davis, CA 95616, USA. 
{\tt fannjiang@math.ucdavis.edu.} 
}
\and Gi-Ren Liu
\thanks{Department of Mathematics, University of California, Davis, CA 95616, USA}
}

\maketitle

\begin{abstract}  
 The null vector method, based on a simple linear algebraic concept,  is proposed as a solution to the phase retrieval problem. 
 In the case with complex Gaussian random measurement matrices, a non-asymptotic  error bound  is derived, yielding an asymptotic regime of accurate approximation comparable to that for the spectral vector method. 
 
 \end{abstract}
\commentout{
\begin{keywords}Phase retrieval, coded diffraction patterns, alternating projection, null initialization, geometric convergence, spectral gap
\end{keywords}
\begin{AMS}49K35, 05C70,  90C08\end{AMS}

\pagestyle{myheadings}
}

\maketitle

\section{Introduction}

We consider the following phase retrieval problem:
Let $A=[a_i]$ be a $n\times N$ random matrix with independently and identically distributed
entries in $N(0,1)+iN(0,1)$, i.e. circularly symmetric complex Gaussian random variables.
Let $x_0\in \IC^n$ and $y=A^* x_0$. 
Suppose we are given $A$ and $b:=|y|$ where  $|y|$ denote  the vector such that $|y|(j)=|y(j)|,\forall j$. The aim of  phase retrieval is to  find $x_0$.

Clearly this is a nonlinear inversion problem. Simple dimension count shows that, for the solution to be unique in general, the number of (nonnegative) data $N$ needs to be at least  twice the number $n$ of unknown (complex) components. 
There are many approaches to phase retrieval, the most efficient and effective, especially when the problem size is large,  being
fixed point algorithms (see \cite{Mar07, ER, DR-phasing,NJS} and references therein) and non-convex optimization
methods \cite{CLS2,truncatedWF}. Phase retrieval has a wide range of applications,
see \cite{Sh} for a recent survey.

 The following observation motivates  our current approach:  
Let $I$ be a subset of $\{1,\cdots,N\}$  and $I_c$ its complement such that $b(i)\leq b(j)$ for all $i\in I, j\in I_c$.  In other words, $\{b(i): i\in I\}$ are the ``weaker" signals and
$\{b(j): j\in I_c\}$ the ``stronger" signals. 
 Let $|I|$ be the cardinality of the set $I$.
 Since $b(i)=|a_i^*x_0|, i\in I, $ are small, $\{ a_i\}_{i\in I}$ is  a set of sensing vectors  nearly orthogonal  to $x_0$. 
 Denote  the sub-column  matrices  consisting of $\{a_i\}_{i\in I} $ and $\{a_j\}_{j\in I_c}$  by $A_I$ and $A_{I_c}$, respectively. 
Define the null vector by the singular vector for the least singular value of $A_I$:  \[
x_{\rm null}:=\hbox{\rm arg}\min\lt\{\|A^*_I x\|^2: x\in \IC^n, {\|x\|=\|x_0\|}\rt\} 
 \]
 which can be computed by  purely linear algebraic methods.

The goal of the paper is to establish a regime where $x_{\rm null}$ is
an accurate approximation to $x_0$.

 \commentout{ Hence we should
 consider the global phase adjustment for a given null vector $x_{\rm null}$
 \beq
\nn
 \min_{\alpha \in \IC,\; |\alpha|=1 }\| \alpha x_{\rm null}-x_0\|^2=2\|x_0\|^2-2\max_{|\alpha|=1} \Re(x_0^*\alpha\xnul).
 \eeq
In what follows, we assume $\xnul$ to be optimally adjusted so  that 
\beq
 \label{52'}\|x_{\rm null}-x_0\|^2=2\|x_0\|^2-2 |x_0^*\xnul| 
\eeq
Also, in many imaging problems, the norm of the true object, like the constant phase factor, is either  recoverable by other prior information or irrelevant to the quality of reconstruction.}

 \commentout{
 Define the dual vector 
  \beq
x_{\rm dual}:= \hbox{\rm arg}\max\lt\{ \|A_{I_c}^*  x\|^2: x\in \cX, {\|x\|=\|x_0\|}\rt\} \eeq
 whose phase factor is optimally adjusted as $x_{\rm null}$.  
}

\commentout{
 \subsection{Isometric $A^*$}
 
For isometric  $A^*$, 
 \beq
 \label{2.3}
x_{\rm null}:=\hbox{\rm arg}\min\lt\{\sum_{i\in I} \|a_i^* x\|^2: x\in \cX, {\|x\|=\|b\|}\rt\}.
 \eeq
We have  
 \[
 \|A_I^*  x\|^2+\|A_{I_c}^*  x\|^2=\|x\|^2
\]
and hence 
\beq
\label{51'} x_{\rm null}=x_{\rm dual},
\eeq
i.e. the null vector is self-dual in the case of isometric $A^*$. 
Eq.  \eqref{51'}  can be used to  construct
the null vector from  $A_{I_c}A^*_{I_c}$ by the power method. 

Let $\mathbf{1}_{c}$ be the characteristic function of the complementary index $I_c$
with $|I_c|=\gamma N$.   The default choice for $\gamma$ is the  median value $\gamma=0.5$.
\begin{algorithm}
\SetKwFunction{Round}{Round}

\textbf{Random initialization:} $x_{1}=x_{\rm rand}$
\\
\textbf{Loop:}\\
\For{$k=1:k_{\textup{max}}-1$}
{
$x'_{k}\leftarrow A (\mathbf{1}_{c}\odot A^*x_k)$;\\
$x_{k+1}\leftarrow [x_k^{'}]_\cX/\|[x_k^{'}]_\cX\|$
}
{\bf Output:} {$\hat x_{\textup{dual}}=x_{k_{\textup{max}}}$.}
\caption{\textbf{The  null initialization}}
\label{null-algorithm}
\end{algorithm}

{
For isometric $A^*$, it is natural to define
\beq
\label{2.4}
x_{\rm null}=\alpha\|b\| \cdot \hat x_{\rm dual},\quad \alpha= {\hat x_{\rm dual}^*x_0\over
| \hat x_{\rm dual}^*x_0|}\eeq
where $\hat x_{\rm dual}$ is the output of Algorithm 1. 
As shown in Section \ref{sec:num} (Fig. \ref{fig:noise0}), 
the null vector is remarkably stable with respect to
noise in $b$. }

%


\subsection{Non-isometric $A^*$} When $A^*$ is non-isometric such as the standard Gaussian random matrix (see below),  the power method is still applicable with  the following modification.

For a full rank $A$, let $A^*=QR$ be  the QR-decomposition of $A^*$ where
$Q$ is isometric and $R$ is a full-rank, upper-triangular square matrix. Let $z=Rx$, $z_0=Rx_0$ and $z_{\rm null}=Rx_{\rm null}$. Clearly, $z_{\rm null}$ is the null vector for the isometric phase retrieval
problem $b=|Q z|$ in the sense of \eqref{2.3}. 

Let $I$ and $I_c$ be the index sets as above.
Let 
\beq
\hat z=\hbox{\rm arg}\max_{\|z\|=1} \|Q_{I_c}z\|.
\eeq
{
Then
\[
x_{\rm null}=\alpha\beta R^{-1}\hat  z
\]
where $\alpha$ is the optimal phase factor and 
\[ \beta={ \|x_0\|\over \|R^{-1} \hat z\|}
\]
may be an  unknown parameter in the non-isometric case.
As pointed out above, when $x_{\rm null}$ with an arbitrary parameter $\beta$ is used as initialization of phase retrieval, the first iteration of AP would recover the true value of $\beta$ as AP is totally independent of any real constant factor. }

}
\section{Approximation  theorem}\label{sec:Gaussian}
Note that both $x_{\rm null}$ and the phase retrieval solution  is at best uniquely defined
up to a global phase factor. So 
we use 
the following error  metric 
 \beq
 \label{52''}  \|x_0x_0^* -x_{\rm null}x_{\rm null}^*\|^2 
=2\|x_0\|^4- 2|x_0^*\xnul|^2
\eeq 
which has the advantage of being independent of the global phase factor. 

The following theorem is our main result. 
\begin{thm} \label{Gaussian}
Suppose 
\beq
\label{53'}
\sigma:={|I|\over N}<1,\quad \nu={n\over |I|}<1.
\eeq
Then for any $\ep\in (0,1),\delta>0$ and $t\in (0, \nu^{-1/2}-1)$ 
 the following error bound 
\beq\label{error}
\|x_0x_0^* -x_{\rm null}x_{\rm null}^*\|^2
&\le& \lt( \left(\frac{2+t}{1-\epsilon} \right) \sigma+\ep \lt(-2\ln (1-\sigma) +\delta\rt)\rt){ 2\|x_0\|^4\over \left(1-(1+t)\sqrt{\nu}\right)^{2}}
 \eeq
holds with probability at least \beq
\label{prob}
&& 1-2\exp\left(-{N}{\delta^2 e^{-\delta}|1-\sigma|^2/2} \right)-
\exp(-{2}  \lfloor |I| \epsilon \rfloor^2/N )-Q
 \eeq
where $Q$ has the asymptotic upper bound 
 \beq\label{a.6}
 2 \exp\lt\{-c\min \lt[{e^2t^2 \over 16} \lt(\ln \sigma^{-1}\rt)^2 {|I|^2/N},~{et\over 4}|I|\ln\sigma^{-1}\rt]\rt\}, \quad\sigma \ll 1,\eeq
with  an absolute constant $c$. 
\end{thm}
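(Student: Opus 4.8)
The plan is to bound the phase-invariant error $\|x_0x_0^*-\xnul\xnul^*\|^2$ by a single ratio: an upper bound on $\|A_I^*x_0\|^2$ over a lower bound on the least singular value of $A_I^*$ restricted to $x_0^\perp$. By homogeneity we may assume $\|x_0\|=1$ throughout; since $\|x_0x_0^*-\xnul\xnul^*\|^2$ is homogeneous of degree $4$ in $\|x_0\|$ while all angular quantities below are scale invariant, the general statement follows at once.

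\textbf{Step 1 (reduction and the minimax inequality).} Because $\xnul$ is defined only up to a phase, choose that phase so $c:=x_0^*\xnul\ge 0$, and write $\xnul=cx_0+w$ with $w\perp x_0$, so $\|w\|^2=1-c^2$ and, by \eqref{52''}, $\|x_0x_0^*-\xnul\xnul^*\|^2=2\|w\|^2$. Since $x_0$ is feasible in the definition of $\xnul$ we have $\|A_I^*\xnul\|^2\le\|A_I^*x_0\|^2$; combining this with the reverse triangle inequality $\|A_I^*\xnul\|\ge\|A_I^*w\|-c\|A_I^*x_0\|$ gives $\|A_I^*w\|\le(1+c)\|A_I^*x_0\|\le 2\|A_I^*x_0\|$ (trivially so when $\|A_I^*w\|<c\|A_I^*x_0\|$). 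As $w$ lies in the $(n-1)$-dimensional space $x_0^\perp$, $\|A_I^*w\|\ge\mu_I\|w\|$ with $\mu_I$ the least singular value of $A_I^*$ restricted to $x_0^\perp$, so
\[
\|x_0x_0^*-\xnul\xnul^*\|^2\ \le\ \frac{8\,\|A_I^*x_0\|^2}{\mu_I^2}.
\]

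\textbf{Step 2 (lower bound on $\mu_I$).} Split each column as $a_i=(x_0^*a_i)x_0+\eta_i$ with $\eta_i\in x_0^\perp$; for a circularly symmetric complex Gaussian vector these pieces are independent, so $\{\eta_i\}_{i=1}^N$ is independent of $b=(|a_i^*x_0|)_i$, hence of the index set $I$. Conditionally on $I$, therefore, $\{\eta_i\}_{i\in I}$ is an i.i.d.\ family of $N(0,1)+iN(0,1)$ Gaussians in $\IC^{\,n-1}$, so $A_I^*$ restricted to $x_0^\perp$ is, in an orthonormal basis, an $|I|\times(n-1)$ Gaussian matrix. A non-asymptotic lower tail bound for its least singular value (Davidson--Szarek / Gordon comparison) gives $\mu_I\ge\sqrt2\bigl(\sqrt{|I|}-(1+t)\sqrt n\bigr)=\sqrt2\,|I|^{1/2}\bigl(1-(1+t)\sqrt\nu\bigr)$, i.e.\ $\mu_I^2\ge 2|I|(1-(1+t)\sqrt\nu)^2$, off an event contributing to $Q$; here $t$ is precisely the slack parameter of the theorem and the sub-Gaussian/sub-exponential dichotomy of $Q$ is the usual Bernstein-type one.

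\textbf{Step 3 (upper bound on $\|A_I^*x_0\|^2$) and assembly.} Here $\|A_I^*x_0\|^2=\sum_{i\in I}|a_i^*x_0|^2$ is the sum of the $|I|$ smallest among the i.i.d.\ variables $|a_i^*x_0|^2$, each distributed as $2E$ with $E\sim\mathrm{Exp}(1)$, so $P(|a_i^*x_0|^2\le 2q)=1-e^{-q}$. Fix $q$ just above the $\sigma$-quantile $q_0=-\ln(1-\sigma)$. Except on a binomial lower-tail event of probability $\exp(-2\lfloor|I|\epsilon\rfloor^2/N)$ (Hoeffding, $\epsilon$ the matching slack), the count $\#\{i:|a_i^*x_0|^2\le 2q\}$ is at least $|I|$, whence $I\subseteq\{i:|a_i^*x_0|^2\le 2q\}$ and $\|A_I^*x_0\|^2\le\sum_{i=1}^N|a_i^*x_0|^2\mathbf 1_{\{|a_i^*x_0|^2\le 2q\}}$; splitting the truncation at $2q_0$, the bulk part concentrates around $2N(\sigma+(1-\sigma)\ln(1-\sigma))$ — the source of $\exp(-N\delta^2e^{-\delta}|1-\sigma|^2/2)$ — while the overflow is at most $2q_0$ times the excess count $\lesssim|I|\epsilon$, giving the $\epsilon(-2\ln(1-\sigma)+\delta)$ contribution, so that $\|A_I^*x_0\|^2/|I|$ is bounded by the bracket in \eqref{error}. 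Substituting this and $\mu_I^2\ge 2|I|(1-(1+t)\sqrt\nu)^2$ into Step 1 collapses the $|I|$'s and the $8$ into the factor $2/(1-(1+t)\sqrt\nu)^2$, yielding \eqref{error}; a union bound over the three bad events gives \eqref{prob}. I expect the main obstacle to be Step 3 with the \emph{explicit} constants: the cleanest route to the sum of the $|I|$ smallest order statistics of an exponential sample is R\'enyi's representation, rewriting it as a weighted sum of i.i.d.\ exponentials with weights $c_j=(|I|-j+1)/(N-j+1)$ (so $\max_j c_j=\sigma$), after which a Bernstein estimate produces the two-regime bound \eqref{a.6}; the delicate part is tuning $q$, $\epsilon$, $\delta$, $t$ so that all these tail probabilities are mutually consistent and the constants come out as stated. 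The global-phase ambiguity of $x_0$ and $\xnul$ is a non-issue, being absorbed into the metric \eqref{52''}.
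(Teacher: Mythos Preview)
Your overall architecture matches the paper's: reduce to the ratio $\|b_I\|^2/\mu_I^2$ with $\mu_I$ the least singular value of $A_I^*$ on $x_0^\perp$; use the independence of the $x_0^\perp$-components from $I$ plus Davidson--Szarek for $\mu_I$; and bound $\|b_I\|^2$ by truncation plus concentration. Step~2 is essentially identical to the paper. Step~3 is in the same spirit, though the paper does not use R\'enyi's representation; it fixes the exact quantile $\tau_*=-2\ln(1-\sigma)$, controls $|\hat I|\ge(1-\epsilon)|I|$ and $\tau_{|I|}\le\tau_*+\delta$ by Hoeffding, and then applies Bernstein directly to the centered truncated variables $b(i)^2\chi_{\{b(i)^2\le\tau_*\}}$, which is where the same parameter $t$ reappears and the asymptotic form \eqref{a.6} of $Q$ comes from. (Your remark that the singular-value failure in Step~2 ``contributes to $Q$'' with a ``Bernstein-type dichotomy'' is slightly off: that event has sub-Gaussian tail $4e^{-nt^2/2}$ and is accounted for separately; $Q$ in \eqref{a.6} is specifically the Bernstein tail from Step~3.)

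The one genuine gap is in Step~1 and your assembly arithmetic. Your reverse-triangle argument gives $\|A_I^*w\|\le(1+c)\|A_I^*x_0\|\le 2\|A_I^*x_0\|$, hence the constant $8$ in $\|x_0x_0^*-\xnul\xnul^*\|^2\le 8\|b_I\|^2/\mu_I^2$. Substituting $\mu_I^2\ge 2|I|(1-(1+t)\sqrt\nu)^2$ and $\|b_I\|^2\le|I|\cdot[\text{bracket}]$ yields a factor $4/(1-(1+t)\sqrt\nu)^2$, not the $2/(1-(1+t)\sqrt\nu)^2$ in \eqref{error}; your claim that the $8$ ``collapses into the factor $2$'' is off by a factor of two. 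The paper recovers the sharper constant $4$ (hence the final $2$) by exploiting more than the minimizing property of $\xnul$: since $\xnul$ is an \emph{eigenvector} of $A_IA_I^*$, writing $x_0=\beta\xnul+\sqrt{1-\beta^2}\,z$ with $z\perp\xnul$ gives $A_I^*\xnul\perp A_I^*z$, so Pythagorean identities hold. Taking $x_\perp:=-\sqrt{1-\beta^2}\,\xnul+\beta z\in x_0^\perp$, one obtains exactly $(2-\beta^2)\|b_I\|^2\ge(1-\beta^2)\|A_I^*x_\perp\|^2$, whence $\tfrac14\|x_0x_0^*-\xnul\xnul^*\|^2\le\|b_I\|^2/\|A_I^*x_\perp\|^2$. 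Your triangle-inequality route cannot reach this constant without using the eigenvector structure.
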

\begin{rmk}
To unpack the implications of Theorem \ref{Gaussian}, consider the following asymptotic:
With $\ep$ and $ t $  fixed,  let 
\[
n\gg 1,\quad  \sigma={|I|\over N}\ll 1,\quad {|I|^2\over N}\gg 1, \quad\nu={n\over |I|}<1.
\]
We have
\beq
\label{5.8.1}
\|x_0x_0^* -x_{\rm null}x_{\rm null}^*\|^2\le c_0 \sigma\|x_0\|^4
\eeq
with probability at least
\[
1-c_1 e^{-c_2 n}-  c_3 \exp\lt\{-c_4\lt(\ln\sigma^{-1}\rt)^2 {|I|^2/ N} \rt\}
\]
  for moderate constants  $c_0, c_1, c_2,c_3,c_4$.
   \label{rmk5.2}
\end{rmk}
  The proof of Theorem  \ref{Gaussian} is given in the next section.

The spectral vector method \cite{NJS,CLS2,truncatedWF} is another
  linear algebraic method and uses the leading singular
  vector $x_{\rm spec}$   of $B^*=\diag{[b]}A^*$   to approximate $x_0$
  where      \[
x_{\rm spec}:=\hbox{\rm arg}\max\lt\{\|B^*x\|^2: x\in \IC^n, {\|x\|=\|x_0\|}\rt\}. 
 \]
 The spectral vector method has a comparable performance guarantee to \eqref{5.8.1}
 which vanishes as $\sigma\to 0$, with probability close to 1 exponentially in $n$. 
 
  \commentout{
  If we set $N=Cn$, 
  then \eqref{5.8.1} becomes
\beq
\label{5.8.2}
\|x_0x_0^* -x_{\rm null}x_{\rm null}^*\|^2\le {c_0\over C\nu} \|x_0\|^4, 
\eeq
which is arbitrarily small with a sufficiently large $C\nu$, with probability close to 1 exponentially in $n$. 
}

In practice, however, the null vector method significantly outperforms the spectral vector method
in terms of accuracy and noise stability when $b$ is contaminated with noise \cite{ER}. 


The drawback with both approaches is that the error metric vanishes only with infinitely
many data, 
$N\to \infty$. For a finite data set, the null vector is best to be deployed in conjunction with 
a fast (locally) convergent fixed point algorithm such as alternating projection \cite{ER}
or the Douglas-Rachford algorithm \cite{DR-phasing}. 

\commentout{
In comparison, the performance guarantee for the spectral initialization (\cite{CLS2}, Theorem 3.3)  assumes 
$N=O(n\log n)$ for the same level of accuracy guarantee  with a success probability
less than $1-8/n^2$. On the other hand, the performance guarantee for the truncated
spectral vector  is comparable to Theorem \ref{Gaussian} in the sense that error bound like \eqref{5.8.2} holds true for the truncated spectral vector with $N=Cn$ and probability
exponentially close to 1 (\cite{truncatedWF}, Proposition 3). 
}

 \section{Proof of Theorem ~\ref{Gaussian}}

  The proof is based on the following two propositions.
 \begin{prop} \label{prop5.1}  
 There exists $x_\bot \in \IC^n$ with $x_\bot^* x_0=0$ and $\|x_\bot\|=\|x_0\|=1$  such that 
  \begin{eqnarray}\label{closeness1}
    \frac{1}{4}\|x_0x_0^* -x_{\rm null} x_{\rm null}^* \|^2 &\leq &  {\|b_I\|^2\over \|A_I^* x_\bot \|^2}. \end{eqnarray}

\commentout{
In terms of relative errors,
\[
\min_{\alpha \in \IC,\; |\alpha|=1 }\| x_{\rm null}-\alpha x_0\|^2=2(1-\beta)\le 4 (\frac{1-\beta^2}{2-\beta^2})\le 4\frac{\|b_I\|^2}{\|A^*_I x_\bot\|^2}.
\]
}
 \end{prop}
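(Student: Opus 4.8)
The plan is to exploit the variational characterization of $x_{\rm null}$ to produce the test vector $x_\bot$, and then convert the resulting inequality into the stated bound on the rank-one distance. First I would recall that $x_{\rm null}$ minimizes $\|A_I^* x\|$ over the unit sphere, so for \emph{any} unit vector $v$ we have $\|A_I^* x_{\rm null}\|^2 \le \|A_I^* v\|^2$; in particular $\|A_I^* x_{\rm null}\|^2 \le \|b_I\|^2 = \|A_I^* x_0\|^2$ by taking $v = x_0$. The idea is then to decompose $x_0$ along $x_{\rm null}$: write $x_0 = \alpha\, x_{\rm null} + \beta\, x_\bot$ where $\alpha = x_{\rm null}^* x_0$ (up to conjugation/phase), $x_\bot$ is a unit vector orthogonal to $x_{\rm null}$, and $|\alpha|^2 + |\beta|^2 = 1$. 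Actually, since we want $x_\bot \perp x_0$ in the statement, I would instead decompose the other way: write $x_{\rm null} = \gamma\, x_0 + \mu\, x_\bot$ with $x_\bot^* x_0 = 0$, $\|x_\bot\| = 1$, $|\gamma|^2 + |\mu|^2 = 1$, and $|\gamma| = |x_0^* x_{\rm null}|$.

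Next I would relate the error metric to $|\gamma|$. From \eqref{52''}, $\|x_0 x_0^* - x_{\rm null} x_{\rm null}^*\|^2 = 2 - 2|\gamma|^2 = 2(1-|\gamma|^2) = 2|\mu|^2$, using $\|x_0\| = \|x_{\rm null}\| = 1$. So the left side of \eqref{closeness1}, after the factor $1/4$, is $\tfrac12 |\mu|^2$. On the other side, I would compute $\|A_I^* x_{\rm null}\|^2 = \|\gamma A_I^* x_0 + \mu A_I^* x_\bot\|^2$ and combine with the minimality inequality $\|A_I^* x_{\rm null}\|^2 \le |\gamma|^2 \|A_I^* x_0\|^2$ (which comes from testing the variational problem against the unit vector obtained by normalizing... wait—one must be careful: $\gamma x_0 + \mu x_\bot$ already has unit norm, that \emph{is} $x_{\rm null}$). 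The cleaner route: test the variational problem against $x_0$ to get $\|A_I^* x_{\rm null}\|^2 \le \|b_I\|^2$, then expand the left side and use a triangle/reverse-triangle inequality to isolate $|\mu|^2 \|A_I^* x_\bot\|^2$. Expanding, $\|A_I^* x_{\rm null}\|^2 = |\gamma|^2\|b_I\|^2 + |\mu|^2\|A_I^* x_\bot\|^2 + 2\Re(\bar\gamma\mu\, x_0^* A_I A_I^* x_\bot)$, and one can choose the phase of $x_\bot$ so the cross term is nonnegative, or bound it. Then $|\mu|^2 \|A_I^* x_\bot\|^2 \le \|A_I^* x_{\rm null}\|^2 - |\gamma|^2\|b_I\|^2 \le (1-|\gamma|^2)\|b_I\|^2 = |\mu|^2 \|b_I\|^2$, which is circular—so the cross term needs genuine care.

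The honest argument is: choose the phase of $x_\bot$ so that the cross term $2\Re(\bar\gamma\mu\, x_0^* A_I A_I^* x_\bot)$ has a controlled sign. One can in fact pick $x_\bot$ proportional to the component of $A_I A_I^* x_0$ orthogonal to $x_0$, but the slickest choice is to select $x_\bot$ making the cross term vanish, or to use the fact that $x_{\rm null}$ being the true minimizer (not just beaten by $x_0$) forces stationarity: $A_I A_I^* x_{\rm null}$ is parallel to $x_{\rm null}$, i.e. $x_\bot^* A_I A_I^* x_{\rm null} = 0$ for any $x_\bot \perp x_{\rm null}$. This orthogonality, applied to the decomposition, gives $\bar\gamma\, x_\bot^* A_I A_I^* x_0 + \bar\mu\, \|A_I^* x_\bot\|^2 = 0$ after choosing $x_\bot$ in the span of $\{x_0, x_{\rm null}\}$, hence $|\mu|\,\|A_I^* x_\bot\|^2 = |\gamma|\,|x_\bot^* A_I A_I^* x_0| \le |\gamma|\,\|b_I\|\,\|A_I^* x_\bot\|$ by Cauchy–Schwarz, so $|\mu|\,\|A_I^* x_\bot\| \le |\gamma|\,\|b_I\|$. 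Squaring, $|\mu|^2 \|A_I^* x_\bot\|^2 \le |\gamma|^2 \|b_I\|^2 \le \|b_I\|^2$, and dividing, $|\mu|^2 \le \|b_I\|^2/\|A_I^* x_\bot\|^2$. Since $\tfrac14\|x_0x_0^* - x_{\rm null}x_{\rm null}^*\|^2 = \tfrac12|\mu|^2 \le |\mu|^2$, the claim follows. The main obstacle is precisely handling the cross term cleanly; I expect the stationarity/orthogonality condition for the minimizer $x_{\rm null}$ is the right tool, and the only subtlety is confirming $x_\bot$ can be taken in $\mathrm{span}\{x_0, x_{\rm null}\}$ (a $2$-dimensional subspace, so this is automatic unless $x_0 \parallel x_{\rm null}$, in which case $\mu = 0$ and the inequality is trivial).
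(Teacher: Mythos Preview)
Your overall plan is sound, and the inequality you are aiming for, namely $|\mu|^2\,\|A_I^* x_\bot\|^2 \le \|b_I\|^2$, is in fact true. However, the key step is flawed. You invoke the eigenvector relation $v^* A_I A_I^* x_{\rm null}=0$ for $v\perp x_{\rm null}$ and then plug in $v=x_\bot$. But your $x_\bot$ is constructed to satisfy $x_\bot\perp x_0$, not $x_\bot\perp x_{\rm null}$; indeed from $x_{\rm null}=\gamma x_0+\mu x_\bot$ one has $x_\bot^* x_{\rm null}=\mu$, so $x_\bot^* A_I A_I^* x_{\rm null}=\lambda\mu$ with $\lambda=\|A_I^* x_{\rm null}\|^2$, not zero. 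Consequently the identity $\bar\gamma\, x_\bot^* A_I A_I^* x_0 + \bar\mu\,\|A_I^* x_\bot\|^2=0$ is false, and the Cauchy--Schwarz step that follows does not yield the claimed bound.

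The fix, which is exactly what the paper does, is to use the stationarity in the \emph{right} basis. Write $x_0=\beta\, x_{\rm null}+\sqrt{1-\beta^2}\,z$ with $z\perp x_{\rm null}$ (so $\beta=|\gamma|$), and take $x_\bot:=-\sqrt{1-\beta^2}\,x_{\rm null}+\beta\,z$, which is the same vector as yours up to phase. Because $x_{\rm null}$ is an eigenvector of $A_I A_I^*$, the pair $\{x_{\rm null},z\}$ is $A_I A_I^*$-orthogonal, so the expansions
\[
\|A_I^* x_0\|^2=\beta^2\|A_I^* x_{\rm null}\|^2+(1-\beta^2)\|A_I^* z\|^2,\qquad
\|A_I^* x_\bot\|^2=(1-\beta^2)\|A_I^* x_{\rm null}\|^2+\beta^2\|A_I^* z\|^2
\]
have no cross terms. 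A direct linear combination, using only $\|A_I^* z\|^2\ge\|A_I^* x_{\rm null}\|^2\ge 0$, then gives $(2-\beta^2)\|b_I\|^2\ge(1-\beta^2)\|A_I^* x_\bot\|^2$, which is precisely \eqref{closeness1}. In short: your instinct to exploit stationarity is correct, but it must be applied to a vector orthogonal to $x_{\rm null}$, not to $x_0$; expanding in the basis $\{x_{\rm null},z\}$ rather than $\{x_0,x_\bot\}$ is what makes the cross terms disappear.
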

 \begin{proof}
Since $\xnul$ is optimally phase-adjusted, we have
\beq
\label{52}
\beta: =x_0^*\xnul\ge 0
\eeq
and \beq
\label{a.2}
 x_0=\beta x_{\rm null}+\sqrt{1-\beta^2}\, z 
 \eeq
for some unit vector $z^*x_{\rm null}=0$.
Then \beq
\label{571} x_\bot :=-(1-\beta^2)^{1/2} x_{\rm null}+\beta z 
\eeq
 is a unit vector  satisfying $x_0^*x_\perp=0$. Since $\xnul$ is a singular vector and $z$ belongs in
 another  singular subspace, we have 
  \beqn
\|A_I^* x_0\|^2&=&\beta^2\|A_I^* x_{\rm null}\|^2+(1-\beta^2)\|A_I^* z\|^2, \\
 \|A_I^* x_\bot \|^2&=&(1-\beta^2)\|A_I^*x_{\rm null}\|^2+\beta^2\|A_I^* z\|^2 
  \eeqn
from which it follows that 
  \beq\label{57}
&&(2-\beta^2)\|A_I^* x_0\|^2-(1-\beta^2)\|A_I^* x_\bot \|^2\\
 &=&\|A_I^* x_{\rm null}\|^2+2(1-\beta^2)^2\left(
 \|A_I^* z\|^2-\|A_I^* x_{\rm null}\|^2\right)\ge 0.\nn
 \eeq
 \commentout{
and 
\beq
\label{56'}
\|A_I^* x_\bot \|^2 - \|A_I^* x_0\|^2=
(2\beta^2-1)  \left(
 \|A_I^* z\|^2-\|A_I^* x_{\rm null}\|^2\right)\geq 0
 \eeq
 by \eqref{52}. 
 }
By  \eqref{57}, \eqref{52''} and $\|b_I\|= \|A_I^*x_0\|$, we also have   \begin{eqnarray}
 &&  \frac{\|b_I\|^2}{  \|A_I^* x_\bot \|^2} 
 \ge {1-\beta^2\over 2-\beta^2} \ge \half(1-\beta^2)=  \frac{1}{4}\|x_0x_0^* -x_{\rm null} x_{\rm null}^* \|^2.
 \eeq

 \end{proof}

   \begin{prop}\label{BnormA}
Let  $A\in \IC^{n\times N}$ be  an i.i.d.  complex standard Gaussian random matrix.
Then for any $\ep>0, \delta>0, t>0$
 \[ \|b_I\|^2\le |I| \lt( \left(\frac{2+t}{1-\epsilon} \right) \frac{|I|}{N}+\ep \lt(-2\ln \lt(1-{|I|\over N}\rt)+\delta\rt)\rt)
\]
with probability at least 
\beq
\label{prob'}
&& 1 -2\exp\left(-{N}{\delta^2 e^{-\delta}|1-\sigma|^2/2} \right)-
2\exp\lt(-{2\ep^2 |1-\sigma|^2} \sigma^2 N\rt)-Q \nn
 \eeq
 where $Q$ has the asymptotic upper bound
 \beqn
2 \exp\lt\{-c\min \lt[{e^2t^2 \over 16} {|I|^2\over N} \lt(\ln \sigma^{-1}\rt)^2,~{et\over 4}|I|\ln\sigma^{-1}\rt]\rt\},\quad  \sigma:={|I|\over N}\ll 1. \eeqn\end{prop}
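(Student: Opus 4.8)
The plan rests on the fact that, once the scalar $\|x_0\|^2$ is stripped off, the whole problem concerns order statistics of a one‑dimensional i.i.d.\ sample. By unitary invariance of the complex Gaussian law we may take $x_0=\|x_0\|e_1$, so $b(i)^2=|a_i^*x_0|^2=\|x_0\|^2 W_i$ with $W_1,\dots,W_N$ i.i.d., each of density $\tfrac12e^{-w/2}$ and distribution function $F(w)=1-e^{-w/2}$; since $I$ indexes the $|I|$ smallest $b(i)$, $\|b_I\|^2$ is $\|x_0\|^2$ times the sum of the $|I|$ smallest order statistics of the $W_i$. Normalize $\|x_0\|=1$, write $\tau^2:=W_{(|I|)}$ for the threshold order statistic and $\#\{i\le N:W_i\le w\}=:G(w)$ for the empirical count, so $\|b_I\|^2=\sum_{i\in I}W_i$, every $i\in I$ has $W_i\le\tau^2$, and $\tau^2=\min\{w:G(w)\ge|I|\}$.

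First I would bound $\tau^2$ from above. As $\tau^2\le w_0$ is equivalent to $G(w_0)\ge|I|$ and $G(w_0)\sim\mathrm{Bin}(N,F(w_0))$, the choice $w_0=-2\ln(1-\sigma)+\delta$ makes $\mathbb{E}G(w_0)-|I|=N(1-\sigma)(1-e^{-\delta/2})>0$; a binomial (Hoeffding) tail — or, uniformly in $w$, the Dvoretzky--Kiefer--Wolfowitz inequality — then gives $\tau^2\le-2\ln(1-\sigma)+\delta$ with probability at least $1-2\exp(-N\delta^2 e^{-\delta}(1-\sigma)^2/2)$, using the elementary bound $1-e^{-x}\ge xe^{-x}$ to reach this last form; this is the first term of the stated probability. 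Next, fixing $\rho:=-2\ln(1-\sigma)$ (so $F(\rho)=\sigma$), split $I=I_1\cup I_2$ with $I_1=\{i\in I:W_i\le\rho\}$, $I_2=I\setminus I_1$. Since every index with $W_i\le\rho$ belongs to $I$ whenever fewer than $|I|$ of them exist, $|I_2|=\max\{0,|I|-G(\rho)\}$, and as $G(\rho)\sim\mathrm{Bin}(N,\sigma)$ has mean $|I|$, a Hoeffding bound gives $|I_2|\le\epsilon|I|$ with probability at least $1-2\exp(-2\epsilon^2(1-\sigma)^2\sigma^2 N)$, the second term. On the intersection of these two events,
\[
\sum_{i\in I_2}W_i\le|I_2|\,\tau^2\le\epsilon|I|\bigl(-2\ln(1-\sigma)+\delta\bigr),
\]
which is exactly the $\epsilon\bigl(-2\ln(1-\sigma)+\delta\bigr)$ contribution to the asserted bound.

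It remains to bound the bulk part $\sum_{i\in I_1}W_i\le\sum_{i:\,W_i\le\rho}W_i=:S$, and this is where $Q$ comes from. Now $S$ is a sum of $N$ i.i.d.\ variables each in $[0,\rho]$ with mean $\mathbb{E}S=N\mu_\rho$, $\mu_\rho=2\sigma+2(1-\sigma)\ln(1-\sigma)$; the identity $\mu_\rho-2\sigma^2=2(1-\sigma)(\sigma+\ln(1-\sigma))<0$ shows $\mathbb{E}S<2\sigma|I|\le\tfrac{2+t}{1-\epsilon}\sigma|I|$, leaving slack at least $\tfrac{t}{1-\epsilon}\sigma|I|$ to absorb $S-\mathbb{E}S$. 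A Bernstein/Bennett (or Chernoff) inequality for bounded i.i.d.\ sums — with range $\rho\asymp\sigma$ and variance proxy at most $N\rho^2\asymp|I|^2/N$ — then bounds the probability that $S$ exceeds this slack by a sub‑Gaussian/sub‑exponential quantity of the form $Q$; the logarithmic factors in \eqref{a.6} are most transparently seen by peeling $[0,\rho]$ into $O(\ln\sigma^{-1})$ geometric scales, bounding each scale's contribution by its count times its top value via a per‑scale Chernoff estimate and union‑bounding over the scales, with $|I|^2/N=\sigma|I|$ the variance budget of the dominant scale. Summing the exceptional and bulk bounds and union‑bounding over the three failure events yields the proposition.

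The step I expect to be the real obstacle is this last one — distilling the fluctuation of the truncated sum $S$ into precisely the form of $Q$. In principle it is a routine sub‑exponential concentration estimate, but keeping track of the interaction between the small range $\rho\asymp\sigma$, the even smaller mean $\mu_\rho\asymp\sigma^2$, and the requested deviation $\asymp t\sigma|I|$ is what forces the $\ln\sigma^{-1}$ factors and the ratio $|I|^2/N$ into the exponent, and the bookkeeping in the Chernoff optimization is where the stated constants are pinned down. A secondary point requiring care is that $I$ is itself a function of the data, so every probabilistic statement must be routed through the empirical distribution function $G$ of the single i.i.d.\ sample — order statistics and level counts — which is exactly what the construction in the first two steps is designed to allow.
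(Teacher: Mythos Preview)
Your decomposition is exactly the paper's: bound the $|I|$-th order statistic by $\tau_*+\delta$ via a Hoeffding tail on the binomial count (the paper's Prop.~3.4(i)); bound the overflow count $|I_2|=\max\{0,|I|-G(\rho)\}$ by $\epsilon|I|$ via Hoeffding again (Prop.~3.4(ii)); bound the overflow contribution by $|I_2|\cdot\tau_{|I|}\le\epsilon|I|(\tau_*+\delta)$; and control the bulk $S=\sum_i W_i\chi_{\{W_i\le\rho\}}$ by a Bernstein-type concentration. The only structural wrinkle is that the paper routes the bulk through the ratio $\|\hat b\|^2/|\hat I|$ and the monotonicity of partial averages $m\mapsto m^{-1}\sum_{i\le m}\tau_i$ (Prop.~3.5), which is where the factor $1/(1-\epsilon)$ enters, whereas your split $I=I_1\cup I_2$ reaches the same place more directly.

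Where the paper differs from your plan is precisely the step you flag as the obstacle, and it does \emph{not} use geometric peeling. It rescales the truncated summands to $Z_i=(N/|I|)^2\bigl(W_i\chi_{\{W_i\le\rho\}}-\mathbb{E}[\,\cdot\,]\bigr)$ and applies the sub-exponential Bernstein inequality with parameter $K=\sup_{p\ge1}p^{-1}(\mathbb{E}|Z_i|^p)^{1/p}$. The $\ln\sigma^{-1}$ factors then drop out of a one-line optimization: since $|Z_i|\lesssim(N/|I|)^2\rho$ on an event of probability $\sigma$, one has $(\mathbb{E}|Z_i|^p)^{1/p}\lesssim(N/|I|)^2\rho\,\sigma^{1/p}$, and $p^{-1}\sigma^{1/p}$ is maximized at $p_*=\ln\sigma^{-1}$ with value $(e\ln\sigma^{-1})^{-1}$, giving $K\lesssim\dfrac{4N}{e|I|}\bigl(\ln\sigma^{-1}\bigr)^{-1}$ and hence the stated $Q$. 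Your Bernstein with the loose variance proxy $N\rho^2$ misses these logarithms; but if instead you insert the \emph{actual} variance $\operatorname{Var}\bigl(W_i\chi_{\{W_i\le\rho\}}\bigr)\le\mathbb{E}\bigl[W_i^2\chi_{\{W_i\le\rho\}}\bigr]\asymp\rho^3/6\asymp\sigma^3$, the classical bounded-variable Bernstein (or Bennett) already gives an exponent of order $\min(t^2,t)\,|I|$, which for moderate $t$ and $\sigma\ll1$ is stronger than the paper's $Q$. So the peeling is unnecessary: either mimic the paper's $K$-computation, or sharpen your variance estimate.
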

 The proof of Proposition \ref{BnormA} is given in the next section.  \\

Now we turn to the proof of Theorem  \ref{Gaussian}.

Without loss of the generality we may assume $\|x_0\|=1$. Otherwise, we replace $x_0, x_{null}$ by $x_0/\|x_0\|$ and $x_{\rm null}/\|x_0\|$, respectively. 
Let $ Q=[Q_{1}\ Q_{2}\ \cdots\ Q_{n}]$  be a unitary transformation where $Q_1=x_0$ or equivalently  $x_0=Q e_1$ where $e_1$ is the canonical  vector with 1 as the first entry and zero elsewhere.  Since unitary transformations do not affect the covariance structure of Gaussian random vectors, the matrix $A^*Q$
is an i.i.d. complex standard Gaussian matrix. 
\commentout{In this representation, it follows from \eqref{52}-\eqref{571} that 
\[
\beta=x_{\rm null}(1),\quad z(1)=\sqrt{1-\beta^2},\quad x_\perp(1)=0. 
\]
 }

\begin{prop} Let $I$ 
be any set such that $b(i)\leq b(j)$ for all $i\in I$ and $j\in I_{\rm c} = \{1,2,...,N\}\setminus  I$.
For any unitary matrix $Q$,  let  $A'\in \IC^{|I|\times (n-1)}$ be the sub-column matrix of $A_I^*Q$ with its first column vector deleted.
 Then $A'$ is an  i.i.d. complex  standard Gaussian random matrix. \end{prop}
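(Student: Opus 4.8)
The plan is to reduce the statement to the elementary fact that a subcollection of i.i.d.\ entries is again i.i.d., once one checks that the data-dependent index set $I$ is independent of the columns of $A^*Q$ out of which $A'$ is built. The feature that makes this work is that $Q$ has first column $Q_1=x_0$ (with $\|x_0\|=1$, so $x_0=Qe_1$): after the unitary change of variables the data $b=|A^*x_0|$ depends only on the first column $A^*x_0$ of $A^*Q$, whereas $A'$ is assembled from the remaining $n-1$ columns.

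First I would record the distributional bookkeeping. For each row index $i$, the vector $a_i^*Q$ is again i.i.d.\ complex standard Gaussian, since $Q^*a_i$ has the same law as $a_i$ by unitary invariance of the circularly symmetric complex Gaussian and coordinatewise conjugation preserves that law. Hence the coordinates of $a_i^*Q$ are mutually independent, so its first coordinate $a_i^*Q_1=a_i^*x_0$ is independent of the block $(a_i^*Q_2,\dots,a_i^*Q_n)$ of the remaining coordinates. As $a_1,\dots,a_N$ are independent, stacking over $i$ shows that $A^*x_0$ (the first column of $A^*Q$) is independent of the matrix $M:=A^*[\,Q_2\ \cdots\ Q_n\,]\in\IC^{N\times(n-1)}$, and that $M$ has i.i.d.\ complex standard Gaussian entries.

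Next, since $b=|A^*x_0|$, the index set $I$ — fixed by any deterministic measurable tie-breaking rule among the sets $S$ with $b(i)\le b(j)$ whenever $i\in S$, $j\notin S$ (ties occur with probability zero) — is a measurable function of $A^*x_0$, hence independent of $M$. Since $A'$ is precisely the row-submatrix $M_I$ of $M$, conditioning on $I$ completes the proof: for each admissible deterministic $S$ with $|S|=|I|$ (a fixed integer $\sigma N$), $M_S$ trivially has i.i.d.\ complex standard Gaussian entries, with a law not depending on which such $S$ is chosen; by independence of $I$ and $M$, the conditional law of $M_I$ given $I=S$ equals that common law, and hence so does the unconditional law of $A'=M_I$.

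This is essentially bookkeeping and I do not foresee a genuine obstacle; the one point not to skip is that the argument really relies on $Q_1=x_0$ — for a generic unitary $Q$ the set $I$ would be a function of all columns of $A^*Q$, conditioning would bias the entries of $A'$, and the conclusion would fail — and, correspondingly, that $I$ must be realized as a measurable function of $A^*x_0$ in order for the independence statement $I\perp M$ to be meaningful.
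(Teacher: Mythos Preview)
Your proposal is correct and follows essentially the same approach as the paper: both use unitary invariance to see that $A^*Q$ is i.i.d.\ complex Gaussian, observe that $I$ is determined by the first column $A^*Qe_1=A^*x_0$ and hence independent of the remaining columns, and then condition on $I$ to conclude that $A'=M_I$ retains the i.i.d.\ Gaussian law. Your write-up is more careful about measurability and tie-breaking, and you rightly flag that the phrase ``for any unitary $Q$'' in the statement must be read in context as $Q_1=x_0$; otherwise the conclusion fails.
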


\begin{proof}
First note that $A_I^*Q= (A^*Q)_I$, the row submatrix of $A^*Q$ indexed by $I$.
As noted already, $A^*Q$ is an i.i.d. complex Gaussian matrix.

Since $x_0=Qe_1$
and $b=|A^*Qe_1|$, $I$ and $I_c$ are entirely determined by the first column of $A^*Q$
which is independent of the other columns of $A^*Q$. 
Consequently, the probability law of $A'$ conditioned on the choice of $I$ equals
the probability law of $A'$ for a fixed $I$. 
Therefore, $A'$ is an i.i.d. complex standard Gaussian matrix. 
\end{proof}

Let $\{\nu_{i}\}_{i=1}^{n-1}$ be the singular values of $A'$ in the ascending order.
For ant $z\in \IC^{n-1}$,  \[
 B':=A'\,\diag(z/|z|)
 \]  has 
 the same set of singular values as $A'$.  Again, we adopt the convention that
 $z(j)/|z(j)|=1$ when $z(j)=0$. We have
\[
\|A' z \|=\|B'\, | z| \|
\] 
and hence
\[
\|A'z \|=( \| \Re(B') \, |z|  \|^2+\| \Im(B')\,  |z|  \|^2)^{1/2}\ge \sqrt{2}\lt(\| \Re(B') \, |z|  \|\wedge \| \Im(B')\,  |z|  \|\rt).
\]

By the  theory of Wishart matrices \cite{DS},  the singular values $\{\nu_j^R \}_{j=1}^{n-1}, \{\nu_j^I \}_{j=1}^{n-1}$ (in the ascending order)  of  $\Re(B'), \Im(B')$ satisfy  the probability bounds
that for every $t>0$ and $j=1,\cdots,n-1$
 \beq
\label{w1} \mathbb{P}\lt(\sqrt{|I|}-(1+t)\sqrt{n}\le \nu_j^R\le \sqrt{|I|}+(1+t)\sqrt{n}\rt)&\ge &1-2e^{-nt^2/2},\\
 \label{w2} \mathbb{P}\lt(\sqrt{|I|}-(1+t)\sqrt{n}\le \nu_j^I\le \sqrt{|I|}+(1+t)\sqrt{n}\rt)&\ge& 1-2e^{-nt^2/2}.
 \eeq
 By Proposition \ref{prop5.1} and \eqref{w1}-\eqref{w2}, we have
 \begin{eqnarray*}
\|x_0x_0^* -x_{\rm null}x_{\rm null}^*\|&\le &{ \sqrt{2}\|b_I\|\over \| \Re(B') \, |y|  \|\wedge \| \Im(B')\,  |y|  \|}\\
&\le&\sqrt{2} \|b_I\|(\nu^R_{n-1}\wedge \nu^I_{n-1})^{-1}\\
&\le& \sqrt{2} \|b_I\|(\sqrt{|I|}-(1+t)\sqrt{n})^{-1}. 
\eeqn
By Proposition \ref{BnormA}, we obtain the desired bound \eqref{error}. 
The success probability is at least the expression \eqref{prob'} minus $4e^{-nt^2/2}$ which equals the expression  
\eqref{prob}.
 

\subsection{Proof of Proposition~\ref{BnormA} }

By the Gaussian assumption, 
   $b(i)^2=|a_i^* x_0|^2$ has a  chi-squared distribution with  the probability density $e^{-z/2}/2$ on $z\in [0,\infty)$ and the cumulative distribution  \begin{eqnarray*}
 F(\tau):=\int_{0}^\tau  2^{-1}\exp(-z/2) dz=1-\exp(-\tau/2). \end{eqnarray*}
 Let  \beq\label{62'} 
\tau_*=-2\ln (1-|I|/N)
\eeq
for which $ F(\tau_*)=|I|/N.$
 
Define
\[ \hat I:=\{i:   b(i)^2\le  \tau_*  \}=\{ i: F(b^2(i))\le |I|/N \},\] and 
\[
\|\hat b\|^2:=\sum_{i\in \hat I} b(i)^2.
\]

Let  \[ \{\tau_1\le \tau_2\le \ldots\le \tau_N\}\] be the sorted sequence of  $\{b(1)^2,\ldots, b(N)^2\}$ in magnitude. 
  \begin{prop}\label{NN} 
{\bf (i)} For any $\delta>0$, we have
   \beq
 \tau_{|I|}&\le& \tau_*+\delta
 \eeq
 with probability at least 
 \beq\label{A3}
 1-\exp\left(-\frac{N}{2} {\delta^2 e^{-\delta} |1-{{|I|/N}}|^2} \right)
   \eeq

{\bf (ii)} For each $\epsilon>0$,
we have
  \beq\label{ZZ} |\hat I| \ge |I| (1-\epsilon)\eeq
  or 
 equivalently,   \beq \label{69} \tau_{\lfloor |I| (1-\epsilon)\rfloor }\le \tau_*
  \eeq
   with probability at least
      \beq
   \label{69.5}
 1-  2\exp\lt(-{4\ep^2 |1-{|I|/N}|^2} |I|^2/N\rt) 
\eeq
   \end{prop}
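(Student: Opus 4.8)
The plan is to reduce both parts to one-sided tail bounds for a single Binomial count, exploiting the elementary fact that for an i.i.d.\ sample from a continuous distribution the $k$-th order statistic $\tau_k$ satisfies $\tau_k\le c$ if and only if at least $k$ of the samples fall in $(-\infty,c]$, and that the number of such samples is $\mathrm{Binomial}\bigl(N,F(c)\bigr)$. Since $b(1)^2,\dots,b(N)^2$ are i.i.d.\ with the continuous CDF $F(\tau)=1-e^{-\tau/2}$, there are a.s.\ no ties, so the reduction is exact. Each part then becomes a Chernoff/Hoeffding/Bernstein estimate for the relevant Binomial, and the remaining work is bookkeeping of constants.

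For part (i) I would take $c=\tau_*+\delta$ and $k=|I|$. Because $-\tau_*/2=\ln(1-|I|/N)$, this gives $F(\tau_*+\delta)=1-(1-|I|/N)e^{-\delta/2}$, so $X:=\#\{i:b(i)^2\le\tau_*+\delta\}$ is Binomial with mean
\[
\mathbb{E}X=NF(\tau_*+\delta)=|I|+(N-|I|)\bigl(1-e^{-\delta/2}\bigr),
\]
which exceeds the threshold $|I|$ by the gap $s:=(N-|I|)\bigl(1-e^{-\delta/2}\bigr)>0$. Then
\[
\mathbb{P}\bigl(\tau_{|I|}>\tau_*+\delta\bigr)=\mathbb{P}(X<|I|)=\mathbb{P}\bigl(X-\mathbb{E}X<-s\bigr)\le\exp\bigl(-2s^2/N\bigr)
\]
by Hoeffding's inequality for sums of i.i.d.\ $[0,1]$-valued variables. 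To reach the form \eqref{A3} I would finish with the scalar inequality $1-e^{-x}\ge xe^{-x}$ at $x=\delta/2$, which gives $s^2\ge(N-|I|)^2(\delta^2/4)e^{-\delta}$ and hence $2s^2/N\ge\tfrac{N}{2}\delta^2e^{-\delta}(1-|I|/N)^2$.

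For part (ii) I would take $c=\tau_*$, so that $F(\tau_*)=|I|/N$ exactly and the count is precisely $|\hat I|\sim\mathrm{Binomial}(N,|I|/N)$ with mean $|I|$. The order-statistic identity gives $\{\tau_{\lfloor|I|(1-\epsilon)\rfloor}\le\tau_*\}=\{|\hat I|\ge\lfloor|I|(1-\epsilon)\rfloor\}$, and on the complementary event $|\hat I|$ lies below its mean by at least $|I|-\lfloor|I|(1-\epsilon)\rfloor\ge\epsilon|I|$. A one-sided concentration bound for $|\hat I|$ then closes the argument: plain Hoeffding already yields $\exp(-2\epsilon^2|I|^2/N)$, while a Bernstein-type bound using $\mathrm{Var}(|\hat I|)=|I|(1-|I|/N)$ is what introduces the factor $(1-|I|/N)^2$ appearing in \eqref{69.5}. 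The two formulations \eqref{ZZ} and \eqref{69} differ only by the integer rounding in the floor, which is harmless.

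There is no real conceptual obstacle; the steps that demand care are (a) computing the mean-to-threshold gap so the Binomial deviation has the right magnitude, (b) the scalar estimate $1-e^{-x}\ge xe^{-x}$ that converts $e^{-\delta/2}$ into the advertised $\delta^2e^{-\delta}$ in part (i), and (c) choosing a concentration inequality that lands on the precise constants in \eqref{A3} and \eqref{69.5} — the appearance of $(1-|I|/N)^2$ in \eqref{69.5} signals that the Binomial variance, rather than the crude range bound, should be used. The floor/ceiling bookkeeping relating \eqref{ZZ} and \eqref{69} is the only remaining nuisance.
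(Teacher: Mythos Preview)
Your approach is essentially the same as the paper's: both parts are reduced to one-sided Hoeffding bounds for the Binomial count of samples falling below a threshold, and your scalar estimate $1-e^{-x}\ge xe^{-x}$ at $x=\delta/2$ is exactly the mean-value bound $F(\tau_*+\delta)-F(\tau_*)\ge(\delta/2)e^{-(\tau_*+\delta)/2}$ the paper uses in part~(i).

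One clarification on part~(ii): the paper's own proof also uses plain Hoeffding and arrives at the bound $1-\exp\bigl(-2\lfloor|I|\epsilon\rfloor^2/N\bigr)$, \emph{without} the factor $(1-|I|/N)^2$ that appears in the displayed statement \eqref{69.5}. That factor is an artifact of the statement (note that the main theorem's probability bound \eqref{prob} carries $\exp(-2\lfloor|I|\epsilon\rfloor^2/N)$, matching the proof rather than \eqref{69.5}). So your ``plain Hoeffding'' answer is in fact what the paper establishes; your speculation that a Bernstein-type bound is needed to manufacture the extra $(1-|I|/N)^2$ is unnecessary, and in any case Bernstein would place $1-|I|/N$ (via the variance) in the \emph{denominator} of the exponent, not squared in the numerator.
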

  \begin{proof} 
  
  {\bf (i)}
 Since  $F'(\tau)= \exp(-\tau/2)/2$, 
 \beq |F(\tau+\epsilon)-F(\tau)|\ge \epsilon/2 \exp(-(\tau+\epsilon)/2).\label{a.16}\eeq
 For $\delta>0$, let
 \[\zeta:=  F(\tau_*+\delta)-F(\tau_*)\]
 which by \eqref{a.16} satisfies
 \beq
 \label{70}
\zeta \ge \frac{\delta}{2}  \exp(-\frac{1}{2}( \tau_*+\delta)). 
\eeq

Let   $\{w_i: i=1,\ldots, N\}$ be  the i.i.d. indicator random variables
   \[
   w_i=\chi_{\{b(i)^2>\tau_*+\delta\}}
   \]
   whose expectation is given by
   \[ \IE[w_i]=1-F(\tau_*+\delta).
   \] 
    The  Hoeffding inequality  yields
     \begin{eqnarray}
\label{ineq1}
 \mathbb{P}(\tau_{|I|} >\tau_*+\delta)
&=&\mathbb{P}\left(\sum_{i=1}^N w_i >N-|I|\right)\\
&=&\mathbb{P}\left(N^{-1}\sum_{i=1}^N w_i-\IE[w_i] >1-|I|/N-\IE[w_i]\right)\nn\\
&=&\mathbb{P}\left(N^{-1}\sum_{i=1}^N w_i-\IE[w_i] >\zeta\right)\nn\\
&\le& \exp(-2N\zeta^2).\nn\end{eqnarray}
Hence, for any fixed $\delta>0$, 
 \beq
  \tau_{|I|}\le& \tau_*+\delta
 \eeq
 holds
 with probability at least 
  \beqn
  \label{a.19}
 1-\exp(-2N\zeta^2)
 &\ge &1-\exp\left(-\frac{N\delta^2}{2} e^{-\tau_*-\delta}\right)\\
  &= &1-\exp\left(-\frac{N \delta^2}{2} e^{-\delta} \left |1-{|I|/N} \right|^2 \right) \nn
   \eeqn
 by \eqref{70}.

{\bf (ii)}    Consider the following replacement
\[
\begin{array}{cl}
(a)   & |I| \longrightarrow \lceil |I| (1-\epsilon)\rceil    \\
(b)   & \tau_* \longrightarrow F^{-1}(\lceil |I| (1-\epsilon)\rceil /N)   \\
(c)   & \delta \longrightarrow F^{-1}(|I|/N)-F^{-1}(\lceil |I| (1-\epsilon)\rceil /N)   \\
(d)   & \zeta \longrightarrow F^{-1}(\tau_*+\delta)-F^{-1}(\tau_*)=|I|/N-\lceil |I| (1-\epsilon)\rceil /N= \frac{\lfloor |I|\epsilon\rfloor}{N}\end{array}
\]
in the preceding argument.
Then (\ref{ineq1}) becomes
 \beqn
\IP\lt(\tau_{\lceil |I|(1-\epsilon)\rceil}>F^{-1}(|I|/N) \rt)& \le & \exp(-2N\zeta^2)
 = \exp\left(-\frac{ 2\lfloor |I|\epsilon\rfloor^2}{N} \right).
 \eeqn
 That is, \[
\tau_{\lceil |I|(1-\epsilon)\rceil}\le \tau_* \]
holds with probability at least
 \beqn
1-\exp(-2 {\lfloor  |I|\epsilon \rfloor^2 }/{N}).
 \eeqn

\end{proof}

   \begin{prop}\label{bhatb}
For each $\epsilon>0$ and $\delta>0$,
 \beq\label{66}
\frac{\|b_I\|^2}{|I|}\le \frac{ \|\hat b\|^2}{|\hat I|}+
\epsilon  (\tau_*+\delta)\eeq
with  probability at least 
\beq \label{77}
 1-2\exp\left(-\frac{1}{2} {\delta^2 e^{-\delta} |1-{|I|/N}|^2 N} \right)-
 2\exp\lt(-{2\ep^2 |1-{|I|/N}|^2}{|I|^2\over N}\rt). 
 \eeq
 
 \end{prop}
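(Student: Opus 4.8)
The plan is to relate the two index sets $I$ and $\hat I$ and control the discrepancy between the partial sums $\|b_I\|^2$ and $\|\hat b\|^2$. Recall $I$ selects the $|I|$ smallest values of $\{b(i)^2\}$, while $\hat I$ selects exactly those indices with $b(i)^2\le \tau_*$, where $\tau_*=-2\ln(1-|I|/N)$ is chosen so that $F(\tau_*)=|I|/N$. By Proposition \ref{NN}(ii), on an event of probability at least $1-2\exp(-2\ep^2|1-|I|/N|^2 |I|^2/N)$ we have $|\hat I|\ge |I|(1-\ep)$, i.e. $\tau_{\lfloor |I|(1-\ep)\rfloor}\le \tau_*$. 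By Proposition \ref{NN}(i), on an event of probability at least $1-\exp(-\tfrac{N}{2}\delta^2 e^{-\delta}|1-|I|/N|^2)$ we have $\tau_{|I|}\le \tau_*+\delta$; intersecting the two events (union bound) gives the stated failure probability in \eqref{77}, up to the factor of $2$ in the first term, which is presumably absorbed for uniformity with Proposition \ref{BnormA}.

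On this good event I would argue as follows. Write $\|b_I\|^2=\sum_{k=1}^{|I|}\tau_k$, the sum of the $|I|$ smallest squared magnitudes. Split this sum at the cutoff $\tau_*$: the indices $i$ with $b(i)^2\le \tau_*$ all lie in $I$ (since they are among the smallest) and contribute at most $\sum_{i\in\hat I} b(i)^2 = \|\hat b\|^2$; more precisely, since $\hat I$ consists exactly of the indices with $b(i)^2\le\tau_*$ and $\tau_{|I|}\le\tau_*+\delta$ forces $\hat I\subseteq I$ with $|I\setminus\hat I| = |I|-|\hat I|\le \ep|I|$ (by part (ii)), we get
\[
\|b_I\|^2 = \|\hat b\|^2 + \sum_{i\in I\setminus\hat I} b(i)^2 \le \|\hat b\|^2 + |I\setminus\hat I|\,(\tau_*+\delta) \le \|\hat b\|^2 + \ep|I|(\tau_*+\delta),
\]
where each term in the leftover sum is bounded by $\tau_{|I|}\le\tau_*+\delta$. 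Dividing by $|I|$ and bounding $\|\hat b\|^2/|I|\le \|\hat b\|^2/|\hat I|$ (using $|\hat I|\le |I|$, which holds since $F(\tau_*)=|I|/N$ and $\hat I\subseteq I$ on the good event) yields \eqref{66}.

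The main subtlety — not a deep obstacle but the point requiring care — is bookkeeping the relationship between $\hat I$ and $I$ on the good event: one must verify that $\hat I\subseteq I$ so that $I\setminus\hat I$ is exactly the set of ``extra'' indices in $I$ with $\tau_*< b(i)^2\le \tau_{|I|}\le\tau_*+\delta$, and that $|I\setminus\hat I|\le\ep|I|$ follows from $|\hat I|\ge(1-\ep)|I|$. A secondary point is that $|\hat I|$ is random, so the inequality $\|\hat b\|^2/|I|\le\|\hat b\|^2/|\hat I|$ must be justified from $|\hat I|\le|I|$, which again holds on the good event; alternatively one notes $F(\tau_*)=|I|/N$ makes $\IE|\hat I|=|I|$ and part (ii) pins $|\hat I|$ near $|I|$ from below, while $\hat I\subseteq I$ gives the matching upper bound deterministically on that event. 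Everything else is the elementary summation estimate above.
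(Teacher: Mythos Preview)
Your approach is essentially the paper's---both invoke Proposition~\ref{NN}(i) and (ii) and carry out the same summation estimate---but there is a gap in the case analysis. You assert that on the good event $\hat I\subseteq I$, justifying this by ``$\tau_{|I|}\le\tau_*+\delta$ forces $\hat I\subseteq I$.'' That implication is false: Proposition~\ref{NN}(i) gives only an \emph{upper} bound on $\tau_{|I|}$ and does not preclude $\tau_{|I|}<\tau_*$, which would mean $|\hat I|>|I|$ and hence $I\subsetneq\hat I$ rather than the reverse. On that alternative your decomposition $\|b_I\|^2=\|\hat b\|^2+\sum_{i\in I\setminus\hat I}b(i)^2$ is wrong, and the step $\|\hat b\|^2/|I|\le\|\hat b\|^2/|\hat I|$ goes in the wrong direction.

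The repair is the explicit case split the paper makes. When $|\hat I|\ge|I|$, the conclusion is immediate from monotonicity of the running averages $T(m)=m^{-1}\sum_{k=1}^m\tau_k$: since the $\tau_k$ are nondecreasing, $T$ is nondecreasing, so $\|b_I\|^2/|I|=T(|I|)\le T(|\hat I|)=\|\hat b\|^2/|\hat I|$ and \eqref{66} holds with room to spare. Your argument then correctly handles the remaining case $|\hat I|\le|I|$, where indeed $\hat I\subseteq I$ (almost surely, since ties have probability zero for a continuous distribution), and the rest of your computation goes through verbatim.
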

 \begin{proof}
Since $\{\tau_j\}$ is an increasing sequence, the function  $T(m)=m^{-1}\sum_{i=1}^m\tau_i$  is also increasing. 
Consider the two alternatives
either   $|I|\ge |\hat I|$ or $|\hat I|\ge |I| $. 
For the latter,   
\[
{\|b_I\|^2}/{|I|}\le { \|\hat b\|^2}/{|\hat I|} 
\]
 due to the monotonicity of $T$.

 For the former case  $|I|\ge |\hat I|$,  we have
 \beqn
T(|I|)&= &|I|^{-1}\sum_{i=1}^{|\hat I|} \tau_i+|I|^{-1}\sum_{i=|\hat I|+1}^{|I|} \tau_i\\
 &\le &T(|\hat I|)+|I|^{-1} (|I|-|\hat I|) \tau_{|I|}.
 \eeqn
By  Proposition \ref{NN}  (ii)  $|\hat I|\geq (1-\ep) |I|$ and hence
 \beqn
T(|I|)&\le &T(|\hat I|)+|I|^{-1}(|I|-|I|(1-\epsilon))\tau_{|I|}
=T(|\hat I|)+\epsilon\tau_{|I|}
\eeqn
 with probability at least given by \eqref{69.5}.

By Proposition \ref{NN} (i),  
$
\tau_{|I|}\le \tau_* +\delta
$
with probability at least given by \eqref{A3}.
\end{proof}

Continuing the proof of Proposition \ref{BnormA},  let us
 consider the i.i.d. centered, bounded random variables \beq
 \label{78}
 Z_i := 
{N^2\over |I|^2} \lt[b(i)^2\chi_{\tau_*}-\IE[b(i)^2\chi_{\tau_*}]\rt] 
 \eeq
 where $\chi_{\tau_*}$ is the characteristic function of the set $\{b(i)^2\leq \tau_*\}$. 
Note that  \beq
\IE(b(j)^2\chi_{\tau_*})&= &\int_{0 }^{\tau_*}  2^{-1} z\exp(-z/2) dz=2-(\tau_*+2)\exp(-\tau_*/2)\le 2 |I|^2/N^2\label{sigmaB} \eeq
and
hence
\beq\label{a.23}
-2 \le Z_i \le \sup\lt\{{N^2\over |I|^2} b(i)^2\chi_{\tau_*} \rt\}= {N^2\over |I|^2} \tau_*.
\eeq

 \commentout{
By the Hoeffding inequality, we then have
  \beq
   \label{a.23'}
   \mathbb{P}\{|\sum_{i=1}^N Z_i|\ge  N t \}\le 2 \exp\lt(-{2N^2 t^2\over N (2+\tau_* N^2/|I|^2)^2}\rt)
   \eeq
}

Next recall the Bernstein-inequality.
   \begin{prop}\cite{Roman} Let $Z_1,\ldots,Z_N$ be i.i.d.  centered sub-exponential
    random variables. Then for every $t\ge 0$, we have
   \beq
   \label{a.21}
   \mathbb{P}\lt\{ N^{-1}|\sum_{i=1}^N Z_i|\ge  t \rt\}\le 2 \exp\lt\{-c \min(Nt^2/K^2, Nt/K)\rt\},
   \eeq
   where $c$ is an absolute constant and 
   \[
   K=\sup_{p\ge 1} p^{-1} (\IE|Z_j|^p)^{1/p}.
   \]
        \end{prop}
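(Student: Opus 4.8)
The plan is the classical exponential‑moment (Chernoff) argument, arranged so that the two regimes in $\min(Nt^2/K^2,\,Nt/K)$ fall out of a single optimization over the free parameter in Markov's inequality. Assume $K<\infty$ (otherwise the bound is vacuous). The first step is to convert the moment hypothesis into control of the moment generating function of one summand. Expanding the exponential and using $\IE Z_j=0$ to annihilate the linear term,
\[
\IE\exp(\lambda Z_j)=1+\sum_{p\ge 2}\frac{\lambda^p\,\IE Z_j^p}{p!}.
\]
Bounding $|\IE Z_j^p|\le\IE|Z_j|^p\le(Kp)^p$, which is exactly the definition of $K$, and using $p!\ge(p/e)^p$, each term is at most $(e\lambda K)^p$; hence for every $\lambda$ with $|\lambda|\le 1/(2eK)$ the geometric series converges and
\[
\IE\exp(\lambda Z_j)\le 1+2e^2\lambda^2K^2\le\exp\!\big(2e^2\lambda^2K^2\big).
\]
This sub‑Gaussian‑type estimate for small $\lambda$ is the one genuinely substantive step; everything after it is bookkeeping.

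Next, by independence and Markov's inequality applied to $\exp(\lambda\sum_i Z_i)$, for every $0<\lambda\le 1/(2eK)$,
\[
\IP\!\Big(\textstyle\sum_{i=1}^N Z_i\ge Nt\Big)\le e^{-\lambda Nt}\big(\IE e^{\lambda Z_1}\big)^N\le\exp\!\big(-N(\lambda t-2e^2\lambda^2K^2)\big).
\]
I then optimize the exponent over the admissible range of $\lambda$. The unconstrained maximizer of $\lambda t-2e^2\lambda^2K^2$ is $\lambda^\star=t/(4e^2K^2)$, with value $t^2/(8e^2K^2)$. If $\lambda^\star\le 1/(2eK)$, i.e.\ $t\le 2eK$, choosing $\lambda=\lambda^\star$ gives the bound $\exp(-Nt^2/(8e^2K^2))$. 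If $t>2eK$, choose $\lambda=1/(2eK)$; then $2e^2\lambda^2K^2=1/2<t/(4eK)$, so $\lambda t-2e^2\lambda^2K^2>t/(4eK)$ and the bound is $\exp(-Nt/(4eK))$. In both cases the exponent is at least $c\min(Nt^2/K^2,\,Nt/K)$ with the absolute constant $c=1/(8e^2)$; one verifies this by splitting according to $t\le K$ versus $t>K$ and $t\le 2eK$ versus $t>2eK$, each subcase being a one‑line comparison.

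Finally, $(-Z_1,\dots,-Z_N)$ is again i.i.d., centered, and has the same parameter $K$ (since $|-Z|=|Z|$), so the identical argument bounds $\IP(\sum_i Z_i\le -Nt)$ by the same quantity; a union bound over the two tails supplies the factor $2$ and yields the two‑sided estimate \eqref{a.21}. The only point requiring care is that every constant introduced along the way — the $2e^2$ in the MGF bound, the threshold $1/(2eK)$, and the final $c$ — is truly absolute, i.e.\ independent of $N$ and of the law of $Z_1$ beyond its dependence through $K$; this is automatic because the MGF estimate was established uniformly in the distribution. I expect that MGF estimate, in particular the verification that $K<\infty$ alone already forces convergence of the exponential series on a fixed neighborhood of $0$ with quadratic control, to be the main (and essentially the only) obstacle.
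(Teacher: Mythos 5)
Your proof is correct and is essentially the standard argument for this result: the paper itself gives no proof, citing Vershynin's notes, and your MGF bound via the moment condition $\IE|Z_j|^p\le (Kp)^p$, followed by the constrained Chernoff optimization that produces the two regimes of $\min(Nt^2/K^2,\,Nt/K)$, is exactly the proof in that reference. All the constants and case splits check out, so there is nothing to add.
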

        
        \begin{rmk}
   For $K$ we have the following estimates
   \beq\label{a.22}
K&\le& {2N^2\over |I|^2} \sup_{p\ge 1} p^{-1} (\IE| b(i)^2\chi_{\tau_*} |^p)^{1/p} \\
& \le & {2N^2\over |I|^2}  
\tau_* \sup_{p\ge 1} p^{-1}( \IE\chi_{\tau_*})^{1/p}\nn\\
&\le& {2N^2\over |I|^2}  
\tau_* \sup_{p\ge 1} p^{-1}(1-e^{-\tau_*/2})^{1/p}.\nn
\eeq
The maximum of the right hand side of \eqref{a.22} occurs at
\[
p_*=-\ln (1-e^{-\tau_*/2})
\]
and hence
\beqn
K&\le &{2N^2\over |I|^2} {\tau_* \over p_*} (1-e^{-\tau_*/2})^{1/p_*}.
\eeqn
We are interested in the regime
\[
\tau_*\asymp 2|I|/N \ll 1
\]
which implies 
\[
p_*\asymp -\ln {\tau_*\over 2}\asymp \ln{N\over |I|}
\]
and consequently 
\beq
\label{a.24}
K\le {4N\over e |I|} \lt(\ln {N\over |I|}\rt)^{-1},\quad \sigma=|I|/N\ll 1.
\eeq

On the other hand, upon substituting the asymptotic bound \eqref{a.24} in  the probability bound \[
Q=2 \exp\lt\{-c \min(Nt^2/K^2, Nt/K)\rt\}
\]
of \eqref{a.21}, we have 
\[
 K\le 2 \exp\lt\{-c\min \lt[{e^2t^2 \over 16} \lt(\ln \sigma^{-1}\rt)^2 {|I|^2/N},~{et\over 4}|I|\ln\sigma^{-1}\rt]\rt\}, \quad\sigma \ll 1.\\
 \]
     \end{rmk}
        
      The Bernstein  inequality
     ensures  that
  with  high probability
 \[
\left |{ \|\hat b\|^2\over N}- \IE(b^2(i)\chi_{\tau_*})\right|\le t{|I|^2\over N^2}.
 \]
By (\ref{ZZ}) and \eqref{sigmaB}, we also have
   \beq\label{80}
 \frac{\|\hat b\|^2}{|\hat I|}&\le  & \IE(b(i)^2\chi_{\tau_*}) \frac{N}{ |\hat I|}+t\frac{|I|^2}{|\hat I| N}\\
& \le&
\left(\IE(b(i)^2\chi_{\tau_*}) \frac{N^2}{|I|^2}+t\right) \frac{|I|}{N}\nn\\
&\le& \frac{2+t}{1-\epsilon} \cdot\frac{|I|}{N}\nn 
 \eeq

By Prop.~\ref{bhatb}, we now have
\beqn
\|b_I\|^2&\le &|I| \lt({\|\hat b\|^2\over |\hat I|} +\ep \lt(\tau_*+\delta\rt)\rt)
 \eeqn
with probability at least given
by \eqref{prob},
 which together with \eqref{80} and \eqref{62'} complete the proof of Proposition~\ref{BnormA}. \\

\commentout{

\section{The spectral initialization}\label{sec:spectral}
Here we compare the null initialization with the spectral initialization used in \cite{CLS2} and the truncated spectral initialization used in \cite{truncatedWF}.

\begin{algorithm}[h]
\SetKwFunction{Round}{Round}
\textbf{Random initialization:} $x_1=x_{\rm rand}$\\
\textbf{Loop:}\\
\For{$k=1:k_{\textup{max}}-1$}
{
$x_k'\leftarrow A(|b|^2\odot A^*x_k);$\\
$x_{k+1}\leftarrow [x_k^{'}]_\cX/\|[x_k^{'}]_\cX\|$;
}
{\bf Output:} $x_{\rm spec}=x_{k_{\rm max}}$.
\caption{\textbf{The spectral initialization}}
\label{spectral-algorithm}
\end{algorithm}

The key difference between Algorithms 1 and 2 is the different weights used in
step 4 where  the null initialization uses $\mathbf{1}_{c}$ and  the spectral vector
method uses $|b|^2$ (Algorithm 2). 
The truncated spectral initialization uses  a still  different weighting 
\begin{equation}\label{truncated_version}
x_{\textup{t-spec}}=\textup{arg}\underset{\|x\|=1}{\textup{max}}
\| A
\left(\mathbf{1}_\tau\odot |b|^{2}\odot A^*x \right)\|
\end{equation}
where 
$\mathbf{1}_\tau$ is the characteristic function of 
the set
\[
\{i: |A^* x(i)|\leq \tau {\|b\|}\}
\]
with an adjustable parameter $\tau$. Both $\gamma$ of Algorithm 1 and $\tau$ of \eqref{truncated_version} can be 
optimized  by tracking and minimizing  the residual $\|b-|A^* x_k|\|$.

As shown in the numerical experiments in Section \ref{sec:num} (Fig. \ref{fig:initials} and~\ref{fig:initials_2masks}), the choice of weight  significantly
affects  the quality of initialization, with the null initialization as  the best performer
(cf.  Remark \ref{rmk5.2}). 

Moreover, because the null initialization depends only on the choice of the index set $I$ and not explicitly on $b$, the method is noise-tolerant and performs well with noisy data (Fig. \ref{fig:noise0}). 
}

\commentout{
\subsection{Noise stability}

We test the performance of AP and WF with the Gaussian noise model
where the noisy data is generated by 
\[
b_{\rm noisy}=|A^{*}x_{0}+ \hbox{complex\ Gaussian\ noise} |.
\]
The noise  is measured by the Noise-to-Signal Ratio (NSR)
\[
\textup{NSR}=\frac{\|b_{\rm noisy}-|A^* x_0|\|_2}{\|A^* x_0\|_2}.
\]

As  pointed out in Section \ref{sec:spectral},  
since  the null initialization depends only on the choice of the index set $I$  and does not depend explicitly on $b$, the method  is more noise-tolerant than other initialization methods.

Let $\hat x_{\rm null}$ be the {\em unit} leading singular vector of $A_{I_c}$, cf. \eqref{2.4}. In order to compare the effect of normalization, we  normalize the null vector  in two different ways 
\beq
\hbox{\rm Case 1.}\quad x_{\rm null}&=&\alpha{\|b_{\rm noisy}\|}\cdot\hat x_{\rm dual}\\
\hbox{\rm Case 2.}\quad x_{\rm null}&=&\alpha {\|x_0\|}\cdot\hat x_{\rm dual}
\eeq
and then compute their respective relative errors versus NSR.

If $\|x_0\|$ is known explicitly, we can apply AP with the normalized
noisy data
\[
\hat b_{\rm noisy}=b_{\rm noisy} {\|x_0\|\over \|b_{\rm noisy}\|}
\]
and improve the performance. 

\section{Conclusion and discussion}\label{sec:sum}
}

{\bf Acknowledgements.} Research of P. Chen is supported in part by the grant 103-2115-M-005-006-MY2 from Ministry of Science and Technology, Taiwan,  and US NIH grant U01-HL-114494. Research of A. Fannjiang  is supported in part by  US National Science Foundation  grant DMS-1413373 and Simons Foundation grant 275037.

\end{document}